\newtheorem{theorem}{Theorem}
\newtheorem{lemma}{Lemma}
\newtheorem{proposition}{Proposition}
{\indent Claim}
\newtheorem{EXAMPLE}{Example}
\newtheorem{definition}{Definition}
\newcommand{\code}{{\mathcal{C}}}
\newcommand{\graph}{{\mathcal{G}}}
\newcommand{\cA}{{\mathcal{A}}}
\newcommand{\cV}{{\mathcal{V}}}
\newcommand{\cE}{{\mathcal{E}}}
\newcommand{\cB}{{\mathcal{B}}}
\newcommand{\cI}{{\mathcal{I}}}
\newcommand{\cJ}{{\mathcal{J}}}
\newcommand{\cM}{{\mathcal{M}}}
\newcommand{\cN}{{\mathcal{N}}}
\newcommand{\cK}{{\mathcal{K}}}
\newcommand{\cS}{{\mathcal{S}}}
\newcommand{\cU}{{\mathcal{U}}}
\newcommand{\coeff}{{\mbox{Coeff }}}
\newcommand{\bldalpha}{{\mbox{\boldmath $\alpha$}}}
\newcommand{\bldc}{{\mbox{\boldmath $c$}}}
\newcommand{\bldnu}{{\mbox{\boldmath $\nu$}}}
\newcommand{\bldp}{{\mbox{\boldmath $p$}}}
\newcommand{\bldq}{{\mbox{\boldmath $q$}}}
\newcommand{\bldqq}{{\mbox{\scriptsize \boldmath $q$}}}
\newcommand{\bldt}{{\mbox{\boldmath $t$}}}
\newcommand{\bldv}{{\mbox{\boldmath $v$}}}
\newcommand{\bldvv}{{\mbox{\scriptsize \boldmath $v$}}}
\newcommand{\bldu}{{\mbox{\boldmath $u$}}}
\newcommand{\blduu}{{\mbox{\scriptsize \boldmath $u$}}}
\newcommand{\bldw}{{\mbox{\boldmath $w$}}}
\newcommand{\bldww}{{\mbox{\scriptsize \boldmath $w$}}}
\newcommand{\bldx}{{\mbox{\boldmath $x$}}}
\newcommand{\bldz}{{\mbox{\boldmath $z$}}}
\newcommand{\bldxi}{{\mbox{\boldmath $\xi$}}}
\newcommand{\zeros}{{\mbox{\boldmath $0$}}}
\newcommand{\bldzero}{{\mbox{\boldmath $0$}}}
    \def\squarebox#1{\hbox to #1{\hfill\vbox to #1{\vfill}}}
\newlength{\Algwidth}
\title{Exposing Pseudoweight Layers in Regular LDPC Code Ensembles}
\author{
\authorblockN{Mark F. Flanagan}
\authorblockA{University College Dublin \\
Belfield, Dublin 4, Ireland  \\
Email: mark.flanagan@ieee.org}
}
\begin{document}
\maketitle

\begin{abstract}
A solution is presented for the asymptotic growth rate of the AWGN-pseudoweight distribution of regular low-density parity-check (LDPC) code ensembles for a selected graph cover degree $M \ge 1$. The evaluation of the growth rate requires solution of a system of $2M+1$ nonlinear equations in $2M+1$ unknowns. Simulation results for the pseudoweight distribution of two regular LDPC code ensembles are presented for graph covers of low degree.
\end{abstract}

\section{Introduction}

In classical coding theory, the \emph{weight distribution} of a code is a useful tool for measuring a linear code's performance under maximum likelihood (ML) decoding. For codes decoded using modern high-performance suboptimal decoding algorithms such as sum-product (SP) or linear-programming (LP) decoding, the \emph{pseudoweight} is the appropriate analog of the codeword weight. There are different definitions of pseudoweight for different channels; one of primary importance is the \emph{additive white Gaussian noise} (AWGN) pseudoweight. The pseudoweight distribution considers all codewords in all codes derived from finite covers of the Tanner graph, which compete with the codewords to be the best SP decoding solution. The set of pseudocodewords has a succinct characterization in terms of the so-called \emph{fundamental polytope} or equivalently, the \emph{fundamental cone} \cite{FKKR,KV-characterization,KV-IEEE-IT}. Also, pseudocodewords arising from finite covers of the Tanner graph were shown to be equivalent to those responsible for failure of LP decoding \cite{Feldman-thesis, Feldman}. While much of the existing work in this area is concerned with performance characterization of particular codes, the performance of \emph{ensembles} of \emph{low-density parity-check} (LDPC) codes \cite{gallager63:low-density} is also of interest. 

In \cite{di06:weight}, the growth rate of the weight distribution of irregular LDPC codes was derived, and a numerical technique was presented for its approximate evaluation. It was shown in \cite[Corollary 50]{KV-IEEE-IT} 
that $(j,k)$-regular ensembles with $j \ge 3$ have a ratio of minimum AWGN-pseudoweight to block length $n$ which decreases to zero asymptotically as $n \rightarrow \infty$. Apart from this result, to the author's knowledge no \emph{ensemble} results exist in the literature concerning AWGN-pseudoweight. In this paper, we make a first step in this direction. We define the degree-$M$ \emph{pseudoweight enumerating function} of a linear block code, and use this concept to find an expression for the growth rate of the AWGN-pseudoweight of regular LDPC code ensembles. We also present simulation results for the $(3,6)$-regular and $(4,8)$-regular LDPC code ensembles.


\section{Preliminaries and Notation}\label{section:irregular_D_GLDPC}
We begin by providing some general settings and definitions. For $\bldu = \left( u_1 \; u_2 \; \cdots \; u_M \right)$, we denote the multinomial coefficient by
\[
\binom{k}{\bldu} = \binom{k}{u_1 \; u_2 \; \cdots \; u_M} = \frac{k!}{\left( k-\sum_{r=1}^{M} u_r \right)! \prod_{r=1}^{M} u_r !} \; .
\] 
For $\bldalpha = \left( \alpha_1 \; \alpha_2 \; \cdots \; \alpha_M \right) \in \mathbb{R}^M$ with $\alpha_r \ge 0$ for each $r=1,2,\cdots,M$ and $\sum_{r=1}^{M} \alpha_r \le 1$, we denote the multivariate entropy function by 
\begin{equation}
h(\bldalpha)= - \sum_{r=1}^{M} \alpha_r \log \alpha_r - \left( 1-\sum_{r=1}^{M} \alpha_r \right) \log \left( 1-\sum_{r=1}^{M} \alpha_r \right) \; .
\label{eq:entropy_function}
\end{equation}
All logarithms in the paper are to the base $e$.

Let $\code$ be a linear block code of length $n$ over the binary field $\mathbb{F}_2$, defined by
\begin{equation}
\code = \{ \bldc \in \mathbb{F}_2^n \; : \; \bldc \mathbf{H}^T = \bldzero \in \mathbb{F}_2^m \}
\label{eq:code_definition}
\end{equation}
where $\mathbf{H} = (H_{j,i})$ is an $m \times n$ matrix over $\mathbb{F}_2$ called the \emph{parity-check matrix} of the code $\code$. Also denote $\cI = \{ 1,2,\cdots,n \}$, $\cJ = \{ 1,2,\cdots,m \}$ and for each $j \in \cJ$
\[
\cI_j = \{ i \in \cI \; : \; H_{j,i} = 1\} \; .
\]

The \emph{Tanner graph} of a linear block code $\code$ over $\mathbb{F}_2$ with parity-check matrix $\mathbf{H}$ is an equivalent characterization of $\mathbf{H}$. The Tanner graph $\graph = (\cV, \cE)$ has vertex set $\cV = \{u_1, u_2, \cdots, u_n \} \cup \{v_1, v_2, \cdots, v_m \}$, and there is an edge between $u_i$ and $v_j$ if and only if $H_{j,i} = 1$. We denote by $\cN(v)$ the set of neighbors of a vertex $v\in\cV$.

We next define what is meant by a finite cover of a Tanner graph.
\begin{definition}
(\cite{KV-characterization})
A graph $\tilde{\graph} = (\tilde{\cV}, \tilde{\cE})$ is a \emph{finite cover} of the Tanner graph $\graph = (\cV, \cE)$ 
if there exists a mapping $\Pi: \tilde{\cV} \longrightarrow \cV$ which is a graph homomorphism
($\Pi$ takes adjacent vertices of $\tilde{\graph}$ to adjacent vertices of $\graph$), such that 
for every vertex $v \in \cV$ and every $\tilde{v} \in \Pi^{-1}(v)$, the neighborhood $\cN(\tilde{v})$ 
of $\tilde{v}$ is mapped bijectively to $\cN(v)$. 
\end{definition}
\medskip
\begin{definition}
(\cite{KV-characterization})
A cover of the graph $\graph$ is said to have degree $M$, where $M$ is a positive integer, if $|\Pi^{-1}(v)| = M$
for every vertex $v \in \cV$. We refer to such a cover graph as an $\mit{M}$\emph{-cover} of $\graph$.  
\end{definition}
\medskip
Let $\tilde{\graph} = (\tilde{\cV}, \tilde{\cE})$ be an $M$-cover   
of the Tanner graph $\graph = (\cV, \cE)$ representing 
the code $\code$ with parity-check matrix $\mathbf{H}$. 
The vertices in the set $\Pi^{-1} (u_i)$ are called \emph{copies} of $u_i$ and are denoted $\{ u_{i,1}, u_{i,2}, \cdots, u_{i,M} \}$, where $i\in\cI$. Similarly, the vertices in the set $\Pi^{-1} (v_j)$ are called \emph{copies} of $v_j$ and are denoted $\{ v_{j,1}, v_{j,2}, \cdots, v_{j,M} \}$, where $j\in\cJ$. 

Less formally, given a code $\code$ with parity check matrix $\mathbf{H}$ and corresponding Tanner graph $\graph$, an $M$-cover of $\graph$ is a graph whose vertex set consists of $M$ copies of each vertex $u_i$ and $M$ copies of each vertex $v_j$, such that for each $j\in\cJ$, $i\in\cI_j$, the $M$ copies of $u_i$ and the $M$ copies of $v_j$ are connected in an arbitrary one-to-one fashion.

For any $M\ge 1$, an $M$-\emph{cover codeword} is a labelling of vertices of the $M$-cover graph with values from $\mathbb{F}_2$ such that all parity checks are satisfied. We denote the label of $u_{i,r}$ by $p_{i,r}$ for each $i\in\cI$, $r = 1, 2, \cdots, M$, and we may then write the $M$-cover codeword in vector form as 
\begin{multline*} 
\bldp = \big( p_{1,1}, p_{1,2}, \cdots, p_{1,M}, p_{2,1}, p_{2,2}, \\
\cdots, p_{2,M}, \cdots, p_{n,1}, p_{n,2}, \cdots, p_{n, M} \big) \; .
\end{multline*}
It is easily seen that $\bldp$ belongs to a linear code $\tilde{\code}$ of length $Mn$ over $\mathbb{F}_2$, 
defined by an $Mm \times Mn$ parity-check matrix $\tilde{\mathbf{H}}$. To construct $\tilde{\mathbf{H}}$, for
$1 \le i^*,j^* \le M$ and $i \in \cI$, $j \in \cJ$, we let  
$i' = (i-1) M + i^*, j' = (j-1) M + j^*$, and
\[
\tilde{H}_{j',i'} = \left\{ \begin{array}{cl}
1 & \mbox{if } u_{i,i^*} \in \cN(v_{j,j^*}) \\
0 & \mbox{otherwise.} 
\end{array} \right.
\]
It may be seen that $\tilde{\graph}$ is the Tanner graph of the code $\tilde{\code}$ corresponding to the parity-check matrix $\tilde{\mathbf{H}}$.

We next define the concept of \emph{pseudocodeword} as follows.
\medskip
\begin{definition}
Let $\code$ be a linear code of length $n$ with parity-check matrix $\mathbf{H}$. For any positive integer $M$, a vector $\bldz = \left( z_1 \; z_2 \; \cdots \; z_n \right)$ of length $n$ with nonnegative integer entries is said to be a \emph{degree}-$M$ \emph{pseudocodeword} of the code $\code$ if and only if there exists an $M$-cover codeword $\bldp$ with 
\[
z_i = \left| \{ r \in \{ 1,2,\cdots,M \} \; : \; p_{i,r} = 1 \} \right| 
\]
for all $i \in \cI$.
\end{definition}
\medskip
\begin{definition}
The \emph{pseudoweight} of a degree-$M$ pseudocodeword $\bldz$ of the code $\code$ is equal to the vector $\bldu = \left( u_1 \; u_2 \; \cdots \; u_M \right)$, where $\bldz$ has $u_r$ entries equal to $r$ for each $r=1,2,\cdots,M$. 
\end{definition}
\medskip
Note that the pseudoweight as defined here corresponds to the ``type'' of a pseudocodeword in the notation of \cite{Smarandache_Vontobel}. Note also that this notion of pseudoweight is applicable to different channels such as the additive white Gaussian noise (AWGN) channel, binary symmetric channel (BSC) or binary erasure channel (BEC). The AWGN-pseudoweight of a pseudocodeword $\bldz$ of length $n$ is defined by \cite{Wiberg}
\begin{equation}
w(\bldz) = \frac{\left( \sum_{i \in \cI} z_i \right) ^2}{\sum_{i \in \cI} z_i^2}
\label{eq:AWGN-pseudoweight}
\end{equation}
and its BSC-pseudoweight and BEC-pseudoweight are defined in \cite{FKKR} (see also \cite[Section 6]{KV-IEEE-IT}).
\medskip
\begin{definition}[\cite{KV-characterization}]
The \emph{fundamental cone} $\cK(\mathbf{H})$ of the $m \times n$ parity-check matrix $\mathbf{H}$ is equal to the set of vectors $\bldnu = (\nu_1 \; \nu_2 \; \cdots \; \nu_n) \in \mathbb{R}^n$ such that $\nu_i \ge 0$ for all $i \in \cI$, and
\[
\sum_{\ell \in \cI_j\backslash \{ i \}} \nu_{\ell} \ge \nu_i \quad \forall j \in \cJ, i \in \cI_j \; .
\]
\label{def:cone}
\end{definition}
\medskip
In \cite{KV-characterization}, it was shown that if $\code$ is a binary linear code with an $m \times n$ parity-check matrix $\mathbf{H}$, then a length-$n$ integer vector $\bldz$ is a pseudocodeword\footnote{Note that the object we call a ``pseudocodeword'' was called an ``unscaled pseudocodeword'' in \cite{KV-characterization}.} of $\mathbf{H}$ if and only if $\bldz \in \cK(\mathbf{H})$ and
\begin{equation}
\bldz \mathbf{H}^T \equiv \zeros \: \: (\mbox{mod } 2)
\label{eq:pcw_condition_2_parity}
\end{equation}
where $\cK(\mathbf{H})$ denotes the fundamental cone of $\mathbf{H}$, and the matrix $\mathbf{H}$ in (\ref{eq:pcw_condition_2_parity}) is interpreted over the integers.
\medskip

We next define the concept of pseudoweight enumerating function of a block code.
\medskip
\begin{definition}
The degree-$M$ \emph{pseudoweight enumerating function} (PWEF) of a block code $\code$ of length $n$ is equal to 
\[
B^{(M)}(\bldx) = \sum_{\blduu} B^{(M)}_{\blduu} x_1^{u_1} x_2^{u_2} \cdots x_M^{u_M}
\]
where $\bldx = \left( x_1 \; x_2 \; \cdots \; x_M \right)$, $\bldu = \left( u_1 \; u_2 \; \cdots \; u_M \right)$ and $B^{(M)}_{\blduu}$ denotes the number of degree-$M$ pseudocodewords\footnote{Note that this count does \emph{not} consider the multiplicity of $M$-cover codewords corresponding to a particular pseudocodeword.} of the code with pseudoweight $\bldu$.
\end{definition}
\medskip
\begin{proposition}
\label{prop:B}
The degree-$M$ PWEF of the single parity-check (SPC) code of length $k$ is 
\begin{equation}
B^{(M)}(\bldx) = \frac{1}{2} \left[ \left( P^{(M)}(\bldx)\right) ^k + \left( Q^{(M)}(\bldx)\right) ^k \right] - T^{(M)}(\bldx)
\label{eq:B_definition}
\end{equation}
where
$P^{(M)}(\bldx) \triangleq 1 + \sum_{r=1}^{M} x_{r}$, $Q^{(M)}(\bldx) \triangleq 1 + \sum_{r=1}^{M} (-1)^r x_{r}$, $T^{(1)}(\bldx) = 0$, and for $M>1$
\begin{multline}
T^{(M)}(\bldx) = T^{(M-1)}(\bldx) \\
+ x_M \left( \sum_{\bldvv \in \cU^{M-1}} \binom{k}{1 \; v_1 \; v_2 \; \cdots \; v_{M-1}} x_1^{v_1} x_2^{v_2} \cdots x_{M-1}^{v_{M-1}}\right)
\label{eq:T_recursion}
\end{multline}
where the set $\cU^{M-1}$ in (\ref{eq:T_recursion}) is the set of integer vectors $\bldv = (v_1 \; v_2 \; \cdots \; v_{M-1})$ satisfying $v_r \ge 0$ for all $r=1,2,\cdots,M-1$, $\sum_{r=1}^{M-1} r v_r < M$ and where $\sum_{r \; \mathrm{odd}} v_r + M$ is even. 
\end{proposition}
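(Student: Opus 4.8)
The plan is to reduce the statement to a pure enumeration of nonnegative integer vectors, via the cone/parity characterization of pseudocodewords recalled above, and then to read off both terms on the right-hand side of (\ref{eq:B_definition}) by elementary generating-function manipulations. For the SPC code of length $k$ the parity-check matrix is the all-ones $1\times k$ matrix, so $\cI=\cI_1$, and by Definition~\ref{def:cone} the fundamental cone condition on a vector $\bldnu$ with nonnegative entries is $\sum_{\ell\in\cI\setminus\{i\}}\nu_\ell\ge\nu_i$ for every $i$; writing $s=\max_i\nu_i$, all of these inequalities are implied by the single one at a maximizing index, which reads $2s\le\sum_\ell\nu_\ell$. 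Since the parity condition (\ref{eq:pcw_condition_2_parity}) here reduces to $\sum_i z_i\equiv 0\pmod 2$, and since a degree-$M$ pseudocodeword automatically satisfies $0\le z_i\le M$, I would first record that the degree-$M$ pseudocodewords of the SPC code of length $k$ are exactly the vectors $\bldz\in\{0,1,\dots,M\}^k$ with $\sum_i z_i$ even and $2\max_i z_i\le\sum_i z_i$.

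Next I would enumerate, by pseudoweight, all of $\{0,\dots,M\}^k$ with no constraints: by the multinomial theorem the coefficient of $x_1^{u_1}\cdots x_M^{u_M}$ in $\big(P^{(M)}(\bldx)\big)^k=(1+x_1+\cdots+x_M)^k$ is $\binom{k}{\bldu}$, which is precisely the number of $\bldz$ of pseudoweight $\bldu$. To retain only the vectors with $\sum_i z_i$ even I would apply the standard sign substitution $x_r\mapsto(-1)^r x_r$, which carries $P^{(M)}$ to $Q^{(M)}$ and multiplies the $x_1^{u_1}\cdots x_M^{u_M}$-coefficient by $(-1)^{\sum_r r u_r}$; since $\sum_r r u_r=\sum_i z_i$ for every $\bldz$ of pseudoweight $\bldu$, the average $\frac{1}{2}\big[(P^{(M)}(\bldx))^k+(Q^{(M)}(\bldx))^k\big]$ is then the pseudoweight enumerator of $\{\bldz\in\{0,\dots,M\}^k:\sum_i z_i\text{ even}\}$.

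It then remains to show that $T^{(M)}(\bldx)$ is the pseudoweight enumerator of those even-sum $\bldz\in\{0,\dots,M\}^k$ that \emph{violate} the cone condition, i.e.\ have $2\max_i z_i>\sum_i z_i$; subtracting this from the quantity of the previous paragraph gives (\ref{eq:B_definition}). The structural observation that drives everything is that such a $\bldz$ has a \emph{unique} maximal coordinate: if two coordinates both attained the maximum $s\ge 1$, then $\sum_{\ell\ne i^*}z_\ell\ge s=z_{i^*}$, contradicting $z_{i^*}>\sum_{\ell\ne i^*}z_\ell$. Hence a cone-violating even-sum vector is specified by the position $i^*$ of its unique maximum, its value $s\in\{1,\dots,M\}$, and the remaining $k-1$ coordinates, which lie in $\{0,\dots,s-1\}$, sum to strictly less than $s$, and have sum $\equiv s\pmod 2$. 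I would then establish the recursion (\ref{eq:T_recursion}) by conditioning on $s$: the vectors with $s\le M-1$ are exactly the cone-violating even-sum vectors of $\{0,\dots,M-1\}^k$, accounting for the term $T^{(M-1)}(\bldx)$; a vector with $s=M$ contributes a factor $x_M$ for the maximal coordinate together with, for each choice of the numbers $v_r$ ($r=1,\dots,M-1$) of the other coordinates equal to $r$, the monomial $\binom{k}{1\;v_1\;\cdots\;v_{M-1}}x_1^{v_1}\cdots x_{M-1}^{v_{M-1}}$, the multinomial coefficient counting the placements. The requirement that the non-maximal coordinates sum to less than $M$ is $\sum_{r=1}^{M-1}rv_r<M$, and evenness of the total $M+\sum_r r v_r$ is, because $rv_r\equiv v_r\pmod 2$ for odd $r$ and $rv_r\equiv 0$ for even $r$, the same as $M+\sum_{r\ \mathrm{odd}}v_r$ being even — precisely the conditions defining $\cU^{M-1}$. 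The base case $T^{(1)}(\bldx)=0$ holds because every even-sum vector in $\{0,1\}^k$ already satisfies $2\max_i z_i\le\sum_i z_i$.

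The step I expect to be the main obstacle is the last one: verifying that the multinomial coefficient $\binom{k}{1\;v_1\;\cdots\;v_{M-1}}$ counts each cone-violating vector with maximum exactly $M$ once and only once — which is exactly where uniqueness of the maximal coordinate is used — and matching the two arithmetic side conditions ``$\sum_i z_i$ even'' and ``$\sum_{\ell\ne i^*}z_\ell<M$'' to the stated description of $\cU^{M-1}$. Everything else is routine: the multinomial/sign-trick calculus of the second paragraph, and the one-line collapse of the SPC fundamental cone to the single inequality $2\max_i z_i\le\sum_i z_i$.
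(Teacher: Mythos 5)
Your proposal is correct and follows essentially the same route as the paper: the paper likewise reduces to counting type vectors $\bldu$ satisfying the box/parity/cone conditions (its conditions ($\cS$--$1$)--($\cS$--$3$), which are exactly your ``entries in $\{0,\dots,M\}$, even sum, $2\max_i z_i\le\sum_i z_i$'' translated to the type vector), attributes the first two conditions to $\frac{1}{2}[(P^{(M)})^k+(Q^{(M)})^k]$ and the violations of the third to $T^{(M)}$. The paper simply declares this ``straightforward to check,'' whereas you supply the details (the sign-substitution argument, the uniqueness of the maximal coordinate, and the recursion on the maximum value), all of which are accurate.
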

\medskip
\begin{proof}
In this case $\mathbf{H}$ is a length-$k$ row vector of ones, so we have
\[
B^{(M)}_{\blduu} =  \left\{ \begin{array}{cl}
\binom{k}{\blduu} & \mbox{if } \bldu \in \cS \\
0 & \mbox{otherwise,} 
\end{array} \right.
\]
where (using Definition \ref{def:cone}) $\cS$ is the set of integer vectors $\bldu$ which satisfy 
\begin{description}
\item[($\cS$--$1$)]
\begin{equation}
u_r \ge 0 \:\: \forall r=1,2,\cdots, M \:\: ; \:\: \sum_{r=1}^{M} u_r \le k, 
\label{eq:SPC_PCW_ineq}
\end{equation}
\item[($\cS$--$2$)]
\begin{equation}
\sum_{r \; \mathrm{odd}} u_r \mbox{ is even, and}
\label{eq:SPC_PCW_1}
\end{equation}
\item[($\cS$--$3$)] 
If there exists $c \in \{ 1,2,\cdots,M \}$ with $u_c=1$ and $u_r=0$ for $c < r \le M$, then
\begin{equation}
c \le \sum_{r=1}^{c-1} r u_r \; .
\label{eq:SPC_PCW_2}
\end{equation}
\end{description}
It is straightforward to check that in (\ref{eq:B_definition}), the term 
\[
\frac{1}{2} \left[ \left( P^{(M)}(\bldx)\right) ^k + \left( Q^{(M)}(\bldx)\right) ^k \right]
\]
takes into account all integer vectors $\bldu$ which satisfy conditions ($\cS$--$1$) and ($\cS$--$2$), and the term $T^{(M)}(\bldx)$ takes into account all those which violate the condition ($\cS$--$3$).
\end{proof}
\medskip
In particular $T^{(2)}(\bldx) = k x_2$, $T^{(3)}(\bldx) = k x_2 + k (k-1) x_1 x_3$, and $T^{(4)}(\bldx) = k x_2 + k (k-1) x_1 x_3 + k x_4 + k(k-1) x_2 x_4 + \frac{1}{2} k(k-1)(k-2) x_1^2 x_4$. Also note that
\begin{multline}
\frac{\partial B^{(M)}(\bldx)}{\partial x_r} = \frac{k}{2} \left[ \left(P^{(M)}(\bldx) \right)^{k-1} + (-1)^r \left( Q^{(M)}(\bldx) \right)^{k-1} \right] \nonumber \\
- \frac{\partial T^{(M)}(\bldx)}{\partial x_r}
\label{eq:B_derivative}
\end{multline}
for $r=1,2,\cdots,M$.


\section{Growth Rate of the AWGN-Pseudoweight Distribution of the Regular LDPC Code ensemble}\label{sec:growth_rate}
For a positive integer $n$, we define a $(j,k)$-regular LDPC code ensemble $\cM_n$ as follows. The Tanner graph of an LDPC code from the ensemble consists of $n$ variable nodes of degree $j$, and $m = nj/k$ check nodes of degree $k$. The variable and check node sockets are connected by a permutation on the $E=nj=mk$ edges of the graph, each permutation being equiprobable.

The concept of degree-$M$ assignment is defined next. This definition is a generalization of the definition of assignment in \cite{di06:weight} (the definition in \cite{di06:weight} corresponds to that of a degree-$1$ assignment).
\medskip
\begin{definition}
A \emph{degree-$M$ assignment} is a labelling of the edges of the Tanner graph with numbers from the set $\{ 0,1,2, \cdots,M \}$. An assignment is said to have \emph{pseudoweight} $\bldt = \left( t_1 \; t_2 \; \cdots \; t_M \right)$ if $t_r$ edges are labelled $r$ for each $r=1,2,\cdots,M$. An assignment is said to be $M$-\emph{check-valid} if according to this labelling, every check node recognizes a valid local degree-$M$ pseudocodeword.
\end{definition} 
\medskip
For any positive integer $M$, the growth rate of the degree-$M$ AWGN-pseudoweight distribution of the $(j,k)$-regular LDPC code ensemble sequence $\{ \cM_n \}$ is defined by 
\begin{equation}
G_M(\alpha) \triangleq \lim_{n\rightarrow \infty} \frac{1}{n} \log \mathbb{E}_{\cM_n} \left[ N^{(M)}_{\alpha n} \right]
\label{eq:growth_rate_result}
\end{equation}
where $\mathbb{E}_{\cM_n}$ denotes the expectation operator over the ensemble $\cM_n$, and $N^{(M)}_{w}$ denotes the number of degree-$M$ pseudocodewords of AWGN-pseudoweight $w$ of a randomly chosen LDPC code in the ensemble $\cM_n$. The limit in (\ref{eq:growth_rate_result}) assumes the inclusion of only those positive integers $n$ for which $\alpha n \in \mathbb{Z}$ and $\mathbb{E}_{\cM_n} [ N_{\alpha n} ]$ is positive (i.e., where the expression whose limit we seek is well defined).

We next define a notion of \emph{asymptotic goodness} of an LDPC code ensemble sequence.
\medskip
\begin{definition}
For each $M \ge 1$, let $G_M(\alpha)$ be the growth rate of the degree-$M$ pseudoweight distribution of an LDPC code ensemble sequence, and let $\alpha_M^*=\inf \{ \alpha > 0 \; | \; G_M(\alpha)\geq 0 \}$. The ensemble sequence is said to be \emph{asymptotically good} if and only if $\inf_{M \ge 1} \{\alpha_M^* \} > 0$.
\end{definition}
\medskip
The following theorem constitutes the main result of the paper.
\medskip
\begin{theorem}
The growth rate of the degree-$M$ pseudoweight distribution of the $(j,k)$-regular LDPC code ensemble sequence $\{ \cM_n \}$ is given by
\begin{equation}
G_M(\alpha) = \frac{j}{k} \log B^{(M)}(\bldx_{0}) - j \sum_{i=1}^{M} q_i \log x_{0,i} - (j-1) h(\bldq)
\label{eq:growth_rate_polynomial_general}
\end{equation}
where $\bldx_0 = \left(x_{0,1} \; x_{0,2} \; \cdots \; x_{0,M} \right)$, $\bldq = \left( q_1 \; q_2 \; \cdots \; q_M \right)$ and $\lambda$ are the solutions to the system of $2M+1$ equations in $2M+1$ unknowns\footnote{Note that $B^{(M)}(\bldx)$ is given by Proposition \ref{prop:B}.}
\begin{equation}
\label{eq:bldx0_q_eqn}
x_{0,r} \frac{\partial B^{(M)}(\bldx_0)}{\partial x_{0,r}} = k q_r B^{(M)}(\bldx_0)
\end{equation}
for each $r=1,2,\cdots,M$,
\begin{multline}
(j-1) \log \left[ \frac{q_r}{1-\sum_{s=1}^{M} q_s} \right] - j \log x_{0,r} \\
= \lambda \left( 2r \sum_{s=1}^{M} s q_s - \alpha r^2 \right)
\label{eq:lagrange_mult_theorem}
\end{multline}
for each $r=1,2,\cdots,M$, and
\begin{equation}
g(\bldq) = \left( \sum_{r=1}^{M} r q_r \right) ^2 - \alpha \sum_{r=1}^{M} r^2 q_r = 0
\label{eq:g_definition}
\end{equation}
satisfying $x_{0,r}>0$ and $q_r>0$ for each $r=1,2,\cdots,M$.
\label{thm:growth_rate} 
\end{theorem}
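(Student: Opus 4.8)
The plan is to compute $\mathbb{E}_{\cM_n}[N^{(M)}_{\alpha n}]$ via a counting argument over degree-$M$ assignments, following the template of \cite{di06:weight} but adapted to the pseudoweight setting. First I would observe that a degree-$M$ pseudocodeword of AWGN-pseudoweight $w = \alpha n$ corresponds to a variable-node labelling $\bldz \in \{0,1,\dots,M\}^n$ together with a consistent edge assignment: each variable node of degree $j$ carrying value $r$ forces all $j$ of its incident edges to be labelled $r$, and each check node must recognize a valid local degree-$M$ pseudocodeword on its $k$ incident edge-labels. Let $\bldq = (q_1 \dots q_M)$ record the fraction of variable nodes carrying each nonzero value $r$, so that $nq_r$ variable nodes are labelled $r$; the AWGN-pseudoweight constraint $w(\bldz) = \alpha n$ becomes exactly $(\sum_r r q_r)^2 = \alpha \sum_r r^2 q_r$, which is equation~(\ref{eq:g_definition}), and the number of ways to choose which variable nodes get which labels contributes a multinomial factor that is asymptotically $\exp(n\, h(\bldq))$.

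Next I would count, for a fixed choice of variable-node labels, the expected number of edge permutations under which every check is $M$-check-valid. The number of edges labelled $r$ is $t_r = j n q_r$ on the variable side. On the check side, a single check node of degree $k$ contributes, to the generating function counting valid local configurations by edge-label type, exactly the polynomial $B^{(M)}(\bldx)$ from Proposition~\ref{prop:B} — this is the crucial point where Proposition~\ref{prop:B} enters. Summing over all $m = nj/k$ checks and matching the check-side type vector to the variable-side type vector $\bldt = (jn q_1 \dots jn q_M)$, the expected count becomes a coefficient extraction: $[\prod_r x_r^{jnq_r}]\, \big(B^{(M)}(\bldx)\big)^{nj/k}$, divided by the number of ways to arrange the edges of each label class (a product of factorials), all multiplied by the variable-side multinomial factor $\binom{n}{\bldq n}$.

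I would then pass to the exponential growth rate by taking $\frac1n \log(\cdot)$ and applying Stirling's approximation to all the factorials and multinomials, and a saddle-point / Hayman-type estimate for the coefficient extraction $[\prod_r x_r^{jnq_r}] (B^{(M)})^{nj/k}$. The saddle point in the variable $\bldx$ is the point $\bldx_0$ where the exponent $\frac{j}{k}\log B^{(M)}(\bldx) - j\sum_r q_r \log x_r$ is stationary, i.e. $x_{0,r}\,\partial_{x_r} B^{(M)}(\bldx_0) / B^{(M)}(\bldx_0) = k q_r$, which is precisely~(\ref{eq:bldx0_q_eqn}). Collecting terms, the growth rate before optimizing over $\bldq$ is $\frac{j}{k}\log B^{(M)}(\bldx_0) - j\sum_r q_r\log x_{0,r} + h(\bldq) - j\sum_r q_r\log\!\big(\text{stuff from edge factorials}\big)$; careful bookkeeping of the $(jnq_r)!$ versus $(nq_r)!$ and $E!$ factorials collapses the coefficient of the entropy-like terms to $-(j-1)h(\bldq)$, giving the claimed form~(\ref{eq:growth_rate_polynomial_general}).

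Finally I would maximize this expression over the simplex of admissible $\bldq$ subject to the pseudoweight constraint $g(\bldq)=0$ of~(\ref{eq:g_definition}). Introducing a Lagrange multiplier $\lambda$ for $g(\bldq)=0$ and using the chain rule through the implicit dependence of $\bldx_0$ on $\bldq$ (the envelope theorem kills the $\partial\bldx_0/\partial q_r$ terms precisely because $\bldx_0$ is the stationary point from the previous step), the first-order condition $\partial G_M/\partial q_r = \lambda\, \partial g/\partial q_r$ reduces to $(j-1)\log[q_r/(1-\sum_s q_s)] - j\log x_{0,r} = \lambda(2r\sum_s s q_s - \alpha r^2)$, which is~(\ref{eq:lagrange_mult_theorem}). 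Together~(\ref{eq:bldx0_q_eqn}),~(\ref{eq:lagrange_mult_theorem}) and~(\ref{eq:g_definition}) form the stated $2M+1$ equations in the $2M+1$ unknowns $(\bldx_0,\bldq,\lambda)$. The main obstacle I anticipate is the coefficient-extraction / saddle-point step: justifying that the combinatorial sum is dominated (to exponential order) by the interior saddle $\bldx_0>0$ requires either a Hayman-admissibility argument or an appeal to the large-deviations / local-limit machinery used in~\cite{di06:weight}, and one must check that the constraint region contributes a point with $q_r>0$ for all $r$ so that the entropy and logarithm terms are finite and the Lagrange analysis is valid; the rest is essentially careful Stirling bookkeeping.
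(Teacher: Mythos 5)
Your proposal follows essentially the same route as the paper: the expected count factors as $\binom{n}{\bldq n}$ times the probability that the induced degree-$M$ assignment of pseudoweight $j\bldq n$ is check-valid, the latter computed as a coefficient extraction from $\bigl(B^{(M)}(\bldx)\bigr)^{nj/k}$ and evaluated asymptotically by a saddle point yielding (\ref{eq:bldx0_q_eqn}), with Stirling collapsing the combinatorial factors to $-(j-1)h(\bldq)$ and a Lagrange-multiplier/envelope argument giving (\ref{eq:lagrange_mult_theorem}). The saddle-point step you flag as the main obstacle is exactly what the paper discharges by citing Burshtein and Miller's asymptotic coefficient lemma, so your outline matches the published proof in both structure and detail.
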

\begin{proof}
Consider a degree-$M$ pseudocodeword $\bldz = \left( z_1 \; z_2 \; \cdots \; z_n \right)$ with pseudoweight $\bldq n$, where $\bldq = \left( q_1 \; q_2 \; \cdots \; q_M \right)$. This pseudocodeword naturally induces a degree-$M$ assignment of pseudoweight $j \bldq n$. Using (\ref{eq:AWGN-pseudoweight}), the AWGN-pseudoweight of $\bldz$ may be written as $w(\bldz) = \alpha n$ where
\begin{equation}
\alpha = \frac{\left( \sum_{r=1}^{M} r q_r \right) ^2}{\sum_{r=1}^{M} r^2 q_r} \; .
\label{eq:alpha}
\end{equation}
Rearranging (\ref{eq:alpha}), and defining $g(\bldq)$ appropriately, yields (\ref{eq:g_definition}). The expected number of degree-$M$ pseudocodewords of pseudoweight $\bldq n$ is then 
\begin{equation}
\mathbb{E}_{\cM_n} \left[ N^{(M)}(\bldq) \right] = \binom{n}{\bldq n} \cdot P^{(M)}_{\mbox{\scriptsize c-valid}}(j \bldq) \; ,
\label{eq:Expected_over_q}
\end{equation}
where $P^{(M)}_{\mbox{\scriptsize c-valid}}(\bldalpha)$ represents the probability that a randomly chosen degree-$M$ assignment with pseudoweight $\bldalpha n$ is $M$-check-valid. This probability is given by
\begin{equation}
P^{(M)}_{\mbox{\scriptsize c-valid}}(j \bldq) = N_c^{(M)}(j \bldq) \Big/ \binom{j n}{j \bldq n} \; ,
\label{eq:Prob_valid_assign}
\end{equation}
where $N_c^{(M)}(\bldalpha)$ denotes the number of $M$-check-valid degree-$M$ assignments of pseudoweight $\bldalpha n$. The numerator of (\ref{eq:Prob_valid_assign}) may be written as
\footnote{Here we use the following result on multivariate generating functions. Let $a_{\blduu}$ be the number of ways of obtaining an outcome $\bldu = (u_1 \; u_2 \; \cdots \; u_M)\in\mathbb{Z}^M$ in experiment $\cA$, and let $b_{\bldvv}$ be the number of ways of obtaining an outcome $\bldv = (v_1 \; v_2 \; \cdots \; v_M)\in\mathbb{Z}^M$ in experiment $\cB$. Also let $c_{\bldww}$ be the number of ways of obtaining an outcome $(\bldu,\bldv)$ in the combined experiment $(\cA, \cB)$ such that $\bldu + \bldv = \bldw$. Denoting $\bldx = (x_1 \; x_2 \; \cdots \; x_M)$, the generating functions $A(\bldx)=\sum_{\blduu} a_{\blduu} x_1^{u_1} x_2^{u_2} \cdots x_M^{u_M}$, $B(\bldx)=\sum_{\bldvv} b_{\bldvv} x_1^{v_1} x_2^{v_2} \cdots x_M^{v_M}$ and $C(\bldx)=\sum_{\bldww} c_{\bldww} x_1^{w_1} x_2^{w_2} \cdots x_M^{w_M}$ are related by $C(\bldx) = A(\bldx) B(\bldx)$.} 
\[
N_{c}^{(M)}(j \bldq) = \coeff \left[ \left( B^{(M)}(\bldx) \right) ^{m}, x_1^{j q_1 n} x_2^{j q_2 n} \cdots x_M^{j q_M n} \right] \; .
\]
We next make use of the following result from \cite[Theorem 2]{Burshtein_Miller}:
\medskip
\begin{lemma}
Let $R(\bldx)$ denote a multivariate polynomial with nonnegative coefficients. For a fixed vector of positive rational numbers $\bldxi = \left( \xi_1 \; \xi_2 \; \cdots \; \xi_M \right)$, consider the set of positive integers $\ell$ such that $\xi_r \ell \in \mathbb{Z}$ for each $r=1,2,\cdots,M$ and $\coeff ( \{ R(\bldx) \} ^{\ell}, x_1^{\xi_1 \ell} x_2^{\xi_2 \ell} \cdots x_M^{\xi_M \ell}) > 0$. Then either this set is empty, or it has infinite cardinality; if $t$ is one such $\ell$, then so is $jt$ for every positive integer $j$. In the latter case, the following limit is well defined and exists:
\begin{multline}
\lim_{\ell\rightarrow \infty} \frac{1}{\ell} \log \coeff \left[ \left( R(\bldx) \right)^{\ell}, x_1^{\xi_1 \ell} x_2^{\xi_2 \ell} \cdots x_M^{\xi_M \ell} \right] \\
= \log R(\bldx_0) - \sum_{r=1}^{M} \xi_r \log x_{0,r}
\end{multline}
where $\bldx_0 = \left(x_{0,1} \; x_{0,2} \; \cdots \; x_{0,M} \right)$ is the unique positive real solution to the system of equations
\begin{equation}
\label{eq:bldx_soln}
x_{0,r} \frac{\partial R (\bldx_0)}{\partial x_{0,r}} = \xi_r R(\bldx_0) 
\end{equation}
for each $r=1,2,\cdots,M$.
\label{lemma:optimization_1D}
\end{lemma} 
\medskip
Applying this lemma by substituting $R(\bldx) = B^{(M)}(\bldx)$, $\ell=m=nj/k$ and $\xi_r = k q_r$, we obtain that as $n \rightarrow \infty$
\begin{equation}
N_{c}^{(M)}(j \bldq) \rightarrow \exp \left\{ n \left( \frac{j}{k} \log B^{(M)}(\bldx_0) - j \sum_{r=1}^{M} q_r \log x_{0,r} \right) \right\} 
\label{eq:Nct_epsilon_mid}
\end{equation}
where $\bldx_0 = \left(x_{0,1} \; x_{0,2} \; \cdots \; x_{0,M} \right)$ is the unique positive real solution to the system given by (\ref{eq:bldx0_q_eqn}) for each $r=1,2,\cdots,M$. Note that (\ref{eq:bldx0_q_eqn}) provides an implicit definition of $\bldx_0$ as a function of $\bldq$.

Using Stirling's formula, the multinomial coefficients in~(\ref{eq:Expected_over_q}) and~(\ref{eq:Prob_valid_assign}) may be approximated as $n \rightarrow \infty$ as
\[
\binom{n}{\bldq n} \rightarrow \exp \left\{ n h(\bldq) \right\} \; ; \; \binom{j n}{j \bldq n} \rightarrow \exp \left\{ n j h(\bldq) \right\} \; .
\]

Therefore as $n \rightarrow \infty$
\begin{multline}
\mathbb{E}_{\cM_n} \left[ N^{(M)}(\bldq) \right] \rightarrow \exp \Bigg\{ n \Bigg[ \frac{j}{k} \log B^{(M)}(\bldx_0) \\
- j \sum_{r=1}^{M} q_r \log x_{0,r} - (j-1) h(\bldq) \Bigg] \Bigg\}
\end{multline}

The expected number of degree-$M$ pseudocodewords with AWGN-pseudoweight $\alpha n$ is equal to the sum of the numbers of degree-$M$ pseudocodewords with pseudoweight $\bldq$ satisfying (\ref{eq:g_definition}), i.e.
\[
\mathbb{E}_{\cM_n} \left[ N^{(M)}_{\alpha n} \right] = \sum_{\bldqq \; : \; g(\bldqq)=0} \mathbb{E}_{\cM_n} \left[ N^{(M)}(\bldq) \right] \; .
\]
Note that the asymptotic expression as $n\rightarrow \infty$ is dominated by that $\bldq$ satisfying (\ref{eq:g_definition}) which maximizes the argument of the exponential function\footnote{Observe that as $n\rightarrow \infty$, $\sum_t \exp ( n Z_t ) \rightarrow \exp ( n \max_t \{Z_t\} )$}. Therefore as $n\rightarrow \infty$
\begin{equation}
\mathbb{E}_{\cM_n} \left[ N^{(M)}_{\alpha n} \right] \rightarrow \exp \left\{ n \left( \max_{\bldqq \; : \; g(\bldqq)=0} f(\bldq) \right) \right\}
\label{eq:optimization_for_AWGN_pw}
\end{equation}
where
\begin{equation}
f(\bldq) = \frac{j}{k} \log B^{(M)}(\bldx_0) - j \sum_{r=1}^{M} q_r \log x_{0,r} - (j-1) h(\bldq)
\label{eq:f_definition}
\end{equation}
and $g(\bldq)$ is given by (\ref{eq:g_definition}). We solve this constrained optimization problem using Lagrange multipliers. At the maximum, we must have
\[
\frac{\partial f(\bldq)}{\partial q_r} = \lambda \frac{\partial g(\bldq)}{\partial q_r}
\]
for all $r = 1,2,\cdots, M$, where $\lambda$ denotes the Lagrange multiplier. This yields
\begin{multline}
\frac{j}{k} \left[ \frac{1}{B^{(M)}(\bldx_0)} \sum_{s=1}^{M} \frac{\partial B^{(M)} (\bldx_0)}{\partial x_{0,s}} \frac{\partial x_{0,s}}{\partial q_r} \right] \\
- j \left( \sum_{s=1}^{M} \frac{q_s}{x_{0,s}} \frac{\partial x_{0,s}}{\partial q_r} + \log x_{0,r} \right) \\
- (j-1) \frac{\partial h(\bldq)}{\partial q_r} = \lambda \frac{\partial g(\bldq)}{\partial q_r} 
\end{multline}
which is equivalent to
\begin{multline}
j \sum_{s=1}^{M} \frac{\partial x_{0,s}}{\partial q_r} \left[ \frac{1}{k B^{(M)}(\bldx_0)} \frac{\partial B^{(M)}(\bldx_0)}{\partial x_{0,s}} - \frac{q_s}{x_{0,s}} \right] \\
- j \log x_{0,r} + (j-1) \log \left( \frac{q_r}{1-\sum_{s=1}^{M} q_s} \right) \\
= \lambda \left( 2r \sum_{s=1}^{M} s q_s - \alpha r^2 \right) \; .
\end{multline}
The term in square brackets is equal to zero for each $r=1,2,\cdots,M$ due to (\ref{eq:bldx0_q_eqn}); therefore this simplifies to (\ref{eq:lagrange_mult_theorem}) for each $r=1,2,\cdots,M$.
\end{proof}

\begin{figure}
\begin{center}\includegraphics[%
  width=1.0\columnwidth,
  keepaspectratio]{./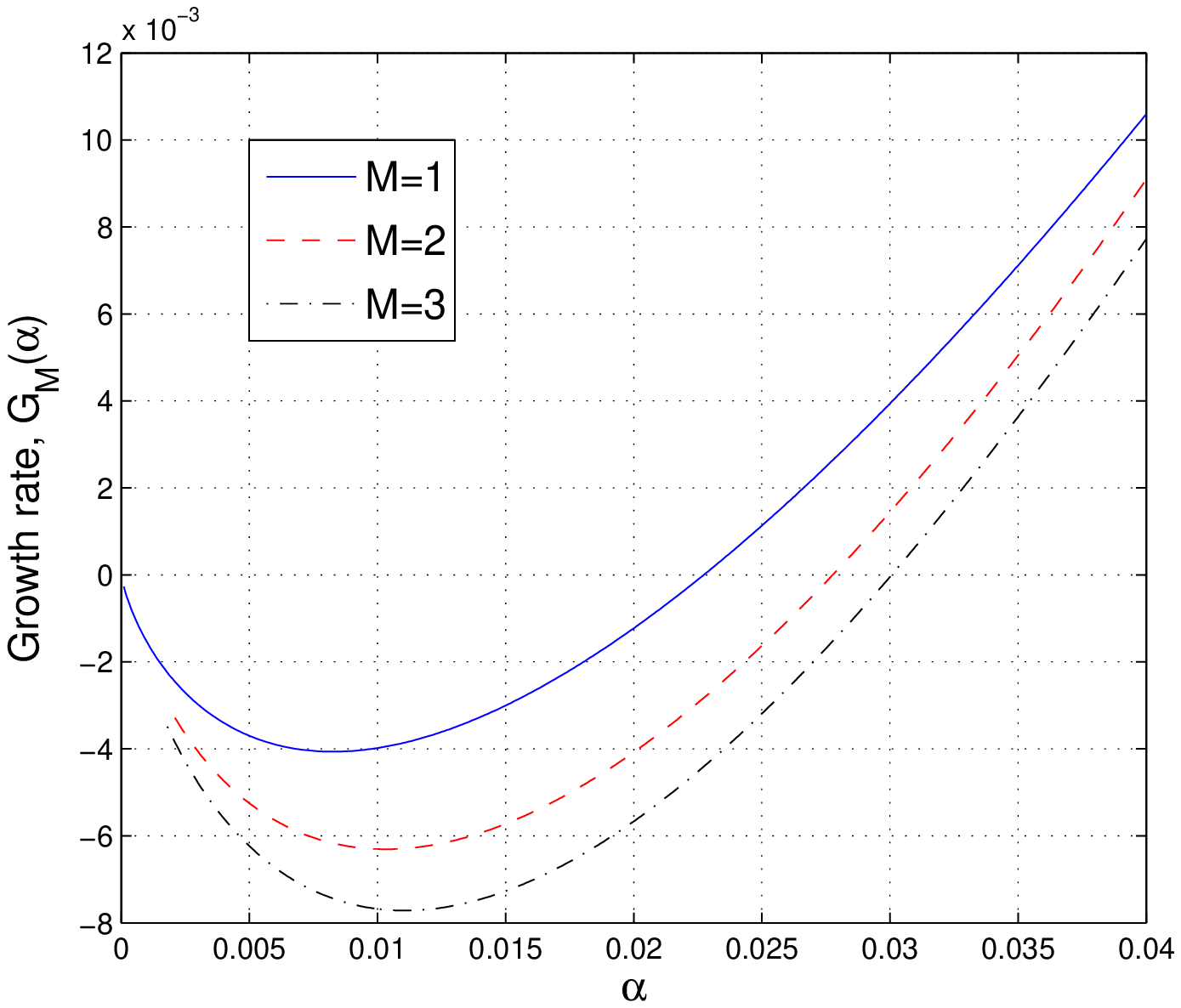}
\end{center}
\caption{\label{cap:Gallager_36_ensemble} Growth rate of the degree-$M$ AWGN-pseudoweight distribution for the $(3,6)$-regular ensemble ($M \le 3$).}
\end{figure}

Note that for the case $M=1$, the maximization in (\ref{eq:optimization_for_AWGN_pw}) is trivial and therefore the solution may be obtained directly from (\ref{eq:f_definition}) as 
\begin{multline}
G_1(\alpha) = \frac{j}{k} \log \left[ \frac{ \left(1+x_0 \right)^{k} + \left( 1-x_0 \right)^{k}}{2} \right] \\
- j \alpha \log x_{0} - (j-1) h(\alpha)
\label{eq:growth_rate_M1}
\end{multline}
where $x_0$ is the unique positive real solution to the equation
\begin{multline}
\alpha \left[ \left( 1+x_0 \right)^{k} + \left( 1-x_0 \right)^{k} \right] = \\
x_0 \left[ \left( 1+x_0 \right)^{k-1} - \left( 1-x_0 \right)^{k-1} \right] \; .
\label{eq:x0_M1}
\end{multline}
Note that $G_1(\alpha)$ is simply the growth rate of the weight distribution in this case, originally obtained in \cite{gallager63:low-density}. Also, this solution may be regarded as a special case of Theorem \ref{thm:growth_rate} where the solution for $\lambda$ via (\ref{eq:lagrange_mult_theorem}) is redundant.

\begin{figure}
\begin{center}
\includegraphics[%
  width=1.0\columnwidth,
  keepaspectratio]{./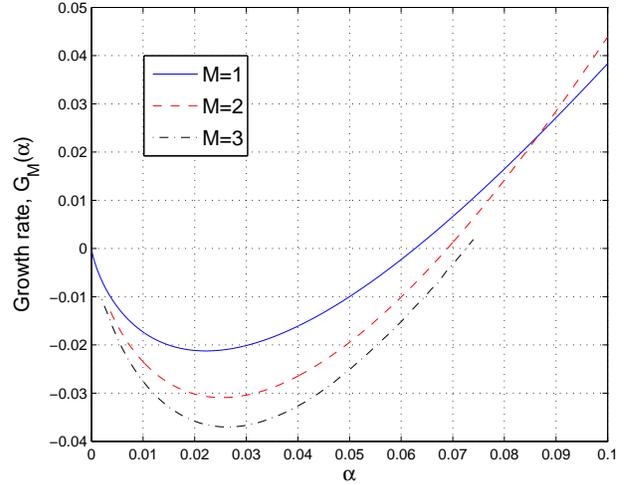}
\end{center}
\caption{\label{cap:Gallager_48_ensemble} Growth rate of the degree-$M$ AWGN-pseudoweight distribution for the $(4,8)$-regular ensemble ($M \le 3$).} 
\end{figure}

\section{Examples and Discussion}
\label{sec:example}
In this section the growth rates of the AWGN-pseudoweight of two example LDPC code ensembles are evaluated using the solution of Theorem \ref{thm:growth_rate}. The growth rate curves for the $(3,6)$-regular LDPC code ensemble and for the $(4,8)$-regular LDPC code ensemble are shown in Figures \ref{cap:Gallager_36_ensemble} and \ref{cap:Gallager_48_ensemble} respectively. Note that $0 < \alpha^*_1 < \alpha^*_2 < \alpha^*_3$ for both ensembles. It is worthwhile to note some distictions between the present analysis and that of \cite[Corollary 50]{KV-IEEE-IT}. In \cite[Corollary 50]{KV-IEEE-IT}, it is proved that $(j,k)$-regular ensembles with $j \ge 3$ have a ratio of minimum AWGN-pseudoweight to block length $n$ which decreases to zero asymptotically as $n \rightarrow \infty$. This result is not in conflict with the results of Figures \ref{cap:Gallager_36_ensemble} and \ref{cap:Gallager_48_ensemble}. The detrimental pseudocodewords of \cite[Corollary 50]{KV-IEEE-IT} are derived from the ``canonical completion'' \cite[Definition 46]{KV-IEEE-IT} and, asymptotically, have AWGN-pseudoweight \emph{sublinear} in the block length -- therefore, these pseudocodewords do not appear in the present analysis. Also, note that the analysis of \cite[Corollary 50]{KV-IEEE-IT} takes the limit $M \rightarrow \infty$ prior to (or jointly with) the limit $n \rightarrow \infty$, in contrast to the present analysis which takes the limit $n \rightarrow \infty$ for finite $M$. Finally, the result of \cite[Corollary 50]{KV-IEEE-IT} is concerned with \emph{minimum} AWGN-pseudoweight and not with the multiplicities of the corresponding pseudocodewords.


\bibliographystyle{IEEEtran}

\end{document}